\documentclass[doc,12pt,floatsintext]{apa7}

\usepackage[american]{babel}
\usepackage{csquotes}

\usepackage{amsmath,amssymb,amsfonts}
\usepackage{amsthm}
\usepackage{mathrsfs}
\usepackage{mathtools}

\usepackage{graphicx}
\usepackage{booktabs}

\makeatletter
\@ifundefined{bigtimes}{%
  \newcommand*{\bigtimes}{\mathbin{\vcenter{\hbox{\scalebox{1.5}{$\times$}}}}}%
}{}
\makeatother

\usepackage[natbibapa]{apacite}

\theoremstyle{plain}
\newtheorem{theorem}{Theorem}
\newtheorem{proposition}[theorem]{Proposition}

\theoremstyle{definition}

\theoremstyle{remark}

\title{Causal Inference for Case Studies in Behavioral Health}
\shorttitle{CAUSAL INFERENCE FOR CASE STUDIES}

\author{Shane W. Sparkes}
\affiliation{Wolf \& Noble, Inc.}

\authornote{%
Shane W. Sparkes, Wolf \& Noble, Inc., Huntsville, Alabama, United States.

The author declares that he has no conflict of interest.

Correspondence concerning this article should be addressed to Shane W. Sparkes, Wolf \& Noble, Inc., Huntsville, AL 35808, United States. Email: ssparkes@wolfnoble.com}

\abstract{We present a framework for causal inference in behavioral health case studies---and observational N=1 settings more generally---under unmeasured confounding. The framework rests on a class of causal estimands, termed $\Omega$ estimands, defined as contrasts of functions of an outcome variable's support rather than its distribution. Because such estimands do not depend on how probability is distributed over supports, they are insensitive to the confounding that limits other methods. We prove that, in a structural causal model, the observational and interventional supports of an outcome coincide under a single assumption---positivity---without any requirement that confounders be known, measured, or adjusted for. Two optional conditions extend the framework: one licensing a client's recalled baseline as a stand-in for sparsely measured baseline periods, and one connecting support contrasts to conventional mean contrasts through an expected-average identity. We adopt a subjectivist (de Finetti) interpretation of probability and situate the framework within mandates for measurement-based care. A case study of cognitive behavioral therapy for anxiety illustrates an elementary approach a provider can use.}

\keywords{causal inference, case studies, N=1, positivity, outcomes analysis, behavioral health}

\begin{document}
\maketitle

\section{Introduction}
This manuscript provides a route to causal inference for observational $N=1$ studies, i.e., case studies. More generally, it provides a route to identifying causal estimands in any setting under one condition only: positivity, defined and proven sufficient in Section~3. This is advantageous, as usually much heavier conditions are required.

Individual care processes in behavioral health motivate this paper. Put succinctly, an individual care process in behavioral health is one such that a set of providers enact a set of clinical interventions to improve client outcomes over time. When care is measurement-based, it also produces a sequence of measurements that warrant outcomes analysis. Such an analysis requires a different set of methodologies than population-based studies because a care process lacks randomization, experimental control, and a population frame. The methods developed here provide an avenue for cogently assessing treatment efficacy within an individual process despite the absence of these components.

Since our applied domain is behavioral health practice, mostly discrete outcome variables are considered. To this end, we define an indexing set $I=\{1, 2, 3, \ldots, m \}$ for some $m \in \mathbb{N}$ that represents potential observation points in an intervention process. This set indexes the stochastic collection $\mathcal{O} = (\mathbf{Y}, \mathbf{T}, \mathbf{X}, \mathbf{U})$, where $\mathbf{Y}$ is usually a $m \times 1$ vector of integer-valued ordinal outcome variables, $\mathbf{T}$ is a $m \times 1$ vector of intervention variables, $\mathbf{X}$ is a $m \times p$ matrix of additional features, and $\mathbf{U}$ is a $m \times q$ matrix of unmeasured variables. It is not assumed that the $Y_i$ or $T_i$ for $i \in I$ are identically distributed.

A few basic definitions are required before introducing the estimands of interest. For an arbitrary $i$, define $Y_i| \text{do}(t)$ as the outcome variable $Y_i$ under the condition that $T_i$ has been experimentally set to $t$, or that it has been intentionally set to $t$ in a manner that accords with a design and that is invariant to background factors. In the language of a structural causal model (SCM), we then say that $Y_i | \text{do}(t) \sim \text{Pr}\{ Y_i = y | \text{do}(t) \}$. This object is often juxtaposed to a version of the outcome variable that has been naturally observed without the manipulation described: $Y_i | t \sim \text{Pr}\{ Y_i = y | t \}$. Literature on the SCM and its basic machinery is available elsewhere \citep{pearl2010foundations, greenland1999causal, pearl2009causality, pearl2010introduction, hernan2010causal}.

Next, we define supports. Recall that the \textit{support} of a random variable $Y_i$, henceforth denoted as $\text{supp}(Y_i)$, is the smallest closed set such that $\text{Pr}\{ Y_i \in \text{supp}(Y_i) \} = 1$. When $Y_i$ is discrete, it is the closed set of $y$ values such that $\text{Pr}(Y_i=y) > 0$. For any stochastic collection $\mathcal{O}$, we can define a joint support for the outcome variables: $\text{supp}(\mathbf{Y}) \subseteq \bigtimes_{i \in I} \text{supp}(Y_i)$, where $\bigtimes$ represents the Cartesian product. The type of support that interests us more is the ``flattened'' one that encompasses all values that are possible as measurements over time with respect to $\zeta = \{Y_1, Y_2, Y_3, \ldots, Y_m \}$. If an arbitrary integer $y$ appears in any of the $m$-tuples of $\text{supp}(\mathbf{Y})$, it is observable with positive probability with respect to $\zeta$, and we collect such values into a set $\text{supp}(\zeta)$. For simplicity, and because we mostly reference discrete outcome spaces, we use $\Omega$ for support sets: $\Omega_{\zeta}$ for the support of a sample and $\Omega_{Y}$ for the support of a random variable when context requires.

Now we can consider our estimands, which we call $\Omega$ estimands. Let $g$ be any function of interest. Then $g(\Omega)$ is a function of the support that does not incorporate probabilities, at least insofar as they are non-uniform. For instance, when $g(\cdot) = \text{max}(\cdot)$ and $\Omega = \{1, 2, 3, 4, 5 \}$, it is the case that $g(\Omega) = 5$. Other meaningful examples include $\text{min}(\cdot)$, the cardinality $|\cdot|$, $\text{range}(\cdot) = \text{max}(\cdot) - \text{min}(\cdot)$, and $\text{median}(\cdot)$, the empirical function applied to an ordered set of constants. Functional averages are another example: the simple arithmetic average of values in a support. When we `take the functional average of $Y$,' symbolized by $\text{Av}(Y)$, we compute the following, where $|\cdot|$ signifies cardinality when used with a set:
$$\text{Av}(Y) = |\Omega|^{-1} \sum_{y \in \Omega} y.$$
Note that $\text{Av}(Y)$ and $\text{Av}(\Omega)$ represent the same object.

The treatment effects that we are interested in are contrasts of $\Omega$ estimands, which we label $\Omega$ effects. For any function of interest $h$, outcome variable $Z$, and contrasting treatment values $t, t'$, an $\Omega$ effect is defined as $h(\Omega_{Z|\text{do}(t)}, \Omega_{Z|\text{do}(t')})$. They can also be defined with respect to flattened supports, i.e., as $h(\Omega_{\zeta|\text{do}(t)}, \Omega_{\zeta|\text{do}(t')})$. In this paper, we often default to a binary $T$, and hence to $h(\Omega_{\zeta|\text{do}(t=1)}, \Omega_{\zeta|\text{do}(t=0)})$.

Changes in the support of an outcome variable following intervention are commonplace. An efficacious blood pressure medication will prevent the materialization of morbidly high measurements for some people; anxiolytic medications can temporarily eliminate extreme levels of anxiety. Most importantly, an effective course of behavioral health treatment can alter consumer experience: a fact that arguably implies the elimination of certain qualia from consumer phenomenology, and hence also their inclusion in outcome spaces.

The remainder of this paper is organized as follows. Section~2 provides background on the applied context and on the challenges---chiefly confounding and matching---that limit cogent causal inference in this domain, and introduces the subjectivist perspective employed for interpreting all conditions and results. Section~3 proves our main proposition and discusses what positivity and the other conditions mean in a behavioral health setting. Section~4 demonstrates the results with data from a clinical case study. Section~5 discusses scope, limitations, and future directions.

\section{Background and Challenges}

The Substance Abuse and Mental Health Services Administration (SAMHSA) is encouraging the adoption of whole-person, measurement-based behavioral health care in the United States, embodied in the archetype of the Certified Community Behavioral Health Clinic (CCBHC) \citep{ccbhc2022, brown2021implementation, mauri2024characterizing, hu2021brief}. Since behavioral health challenges can possess multiple interacting etiologies \citep{uher2017etiology, susser2006psychiatric, borsboom2017network} and because each client materializes a whole life-process of accrued causes and semantic frameworks, CCBHCs provide a breadth of flexible services---case management, primary care coordination, peer support, psychotherapy, and access to psychiatric services---adapted to each person. CCBHCs also possess a mandate to continuously monitor program quality and efficacy through quantitative measures administered at regular intervals. Implementing regular measurement and outcomes analysis is an important step toward a higher quality of care \citep{frank2023toward}.

The analysis of treatment efficacy involves causal inference. However, a felicitous use of an accrued body of measurements for causal reasoning---and ultimately for treatment decision-making---requires a conceptual scaffolding that can withstand the obstacles of this domain. Since every care process is essentially an N-of-1 longitudinal observational study, a firm elementary ground is often elusive, especially with respect to standard statistical methodologies. This section explicates two main challenges: identification and dependence. First, however, it orients the reader to what probability \textit{means} in our setting.

\subsection{A Finettian Perspective}

In this paper, we wholly endorse the famous statement of Bruno de Finetti, here paraphrased: `PROBABILITY DOES NOT EXIST' \citep{deFinetti1974Theory}. Probability obviously exists in some sense because we reference it as an object. What is asserted is something different: that probability does not exist outside of the structures of our experience. Probability is a personal measure of uncertainty. This matters, first, because it changes how we interpret the conditions of Section~3. The proofs remain valid under other interpretations of probability, but the meaning of their statements changes radically. Second, it is important because this philosophy of probability centers data in a way that is congruent with intuition.

A measurement-based care process produces a collection of data, $\mathcal{D}$. This collection is not a \textit{sample} in the traditional sense. Providers do not define a sampling frame and then choose moments to observe and intervene in accordance with a stochastic mechanism. Further, we do not use superpopulations since they are a metaphysical incursion imported for mathematical convenience. What is undeniable in any setting is that each datum is recorded as a representation of being as experienced. Random variables arise only as a function of our \textit{uncertainty} in the designation.

A clinical example is useful. A therapist draws attention to their client's accomplishment with a friendly tone. To an outsider, this is a nicety; for the provider, it is a clinical intervention that seeks to assist the client to reframe their self-understanding. Let $t$ represent this therapeutic action, which the therapist records. However, on reflection, they are not certain that they succeeded in enacting it. They \textit{did} something; this much is undeniable. They are unsure of the \textit{essence} of what they did. Hence, they register $T$, a random variable: somewhat sure that $T=t$, but not certain. This is the type of random variable a behavioral health provider encounters.

This awareness alleviates the notion that a random variable necessarily collapses into `its observation.' What we actually possess from measurement-based care is $\mathcal{D} = (\mathbf{y}, \mathbf{t}, \mathbf{x})$: a rigid record, an anchor, and necessarily an element of the support of $(\mathbf{Y}, \mathbf{T}, \mathbf{X})$ insofar as it is accepted as record. The recording of $\mathcal{D}$, however, suggests no automatic elimination of uncertainty in its representation of what is, or was, present in experience.

On this point, it is useful to differentiate a random variable's \textit{scale} from its \textit{support}. A scale is a function that maps one form of being into another in accordance with a unit. Per that unit and the system of relations it renders present, it possesses a co-domain of values. The \textit{support} of a random variable is necessarily a subset of the co-domain of the scale, but they are only incidentally equal. Although a typical pain scale goes from 0 to 10, a person in intense pain will select only 8, 9, or 10 with non-zero probability. In the minimum, it is self-given that they would never select zero insofar as they are coherent. This distinction is important because many might interpret $\Omega$ effects as automatically null via the faulty equivocation of scales and supports. We return to this point in Section~5.

\subsection{The Problem of Identification}

The challenge of causal inference is well understood. One wishes to know if $T=t$ necessarily unfolds a certain presence of being, $Y=y$; stated counterfactually, that if $t$ had not become present, then neither would have $y$. In the world of our observation, we only possess $(t, y)$: one cannot empirically measure $\delta = y|\text{do}(t=1) - y|\text{do}(t=0)$ for the same unit \citep{Holland1986}. In a behavioral health setting, we cannot rewind and observe what would have happened had a client not received treatment under the very same conditions. Thus, there cannot exist a strictly empirical science of cause and effect.

Beyond this fundamental challenge, treatment efficacy in a care process faces another. Providers do not actualize strict interventions in an experimental sense. A potential client arrives in relation to a labyrinth of unknown contextual variables. After a risk screening or assessment, a baseline score $y_0$ is recorded previous to the commencement of contracted treatment; each subsequent $y_i$ is observed during encounters actualized in accordance with client and provider choice-in-context. Hence, what we have at hand is implicitly $y_0|(t=0)$ and $y_i|(t=1)$ for all $i \in I$: recordings that result from activities unfolding in dialogue with latent factors. At all points, provider knowledge is incomplete, and every encounter unfolds as a surprise of disposition in relation to background happenings. At no point does there exist control in the typical epistemic sense.

Two challenges thus present themselves. One, these values might differ from what would have been observed under strict intervention, i.e., under $\text{do}(t)$. This is important because strict intervention is what suffices for judging that the relationship between treatment and outcome is uninfluenced by background factors. Two, we cannot possess an empirical measurement of the form $y_i|(t=0)$ when we have $y_i|(t=1)$ in hand for a given $i$. Further, a baseline score necessarily manifests at a time and in a space separate from an arbitrary $y_i$. We can contrast their properties to evidence a causal effect only under additional constraints that are extrinsic to the records themselves.

SCMs address the first challenge. Under the condition that $\mathbf{X}$ is sufficient for closing all backdoor paths in the directed acyclic graph (DAG) of the SCM, it follows that $y|t, \mathbf{x} = y|\text{do}(t), \mathbf{x}$. Stated otherwise, by conditioning on a set of variables that blocks the influence of other variables on both $T$ and $Y$, the resulting DAG becomes exchangeable with the DAG under $\text{do}(T=t)$, given $\mathbf{X}=\mathbf{x}$. This conditional equation breaks down when $\mathbf{X}$ is not sufficient for closing all backdoor paths. When this is the case, there still exists a vector of unmeasured variables $\mathbf{U}$ that impacts both treatment and outcome, as in our described process of care.

The second problem is often addressed via matching conditions. Population-based studies exist under the enforced assumption that the observed outcomes $\{ y_i\}_{i \in I}$ are separate observations of the same essential class of being with respect to the properties of interest, regardless of the time and space of measure. This condition is the silent necessity of population formation. It suffices for `observing the same (type of) entity under contrasting conditions,' so that contrasting $\text{E}\{Y_i | \text{do}(t=1) \}$ and $\text{E}\{Y_j | \text{do}(t=0) \}$ stands in for contrasting $\text{E}\{Y_i | \text{do}(t=1) \}$ and $\text{E}\{Y_i | \text{do}(t=0) \}$ for an arbitrary $i \in I$.

Constraints of this character are often intuitive and reasonable. Nevertheless, they are a non-empirical imposition of an often incidental ontology. We can do without them, but the result is a much weakened state of causal inference. Even if all intervention objects are identified, contrasting them does not stand in for a direct counterfactual comparison, since `\textit{ceteris paribus}' is not substantiated. Arguably, the contrast often \textit{does} still yield evidence of a causal relation, but it is weak in an informal inductive sense. Imagine two comparable studies on the relief of headache pain by aspirin: one when we are eighteen (assigned to control) and another when we are fifty (assigned to the medication). Much of our pain experience has changed between them. Taking the contrast at face value, we can assert a difference between the scores of \textit{that} person at \textit{that} time and \textit{this} person \textit{here}, and, crucially, that these two individuals are \textit{similar}. This much is self-given, although it does not establish exactly what we want. Asserting that such a contrast offers no experimental evidence at all is not self-given, however. Indeed, this quality of similarity is essentially what population-based studies achieve when they match individuals on demographics or estimated propensity scores. At best they approximate an identity. Essential identity is only ever \textit{formally} enforced; it can never be empirically enforced. The strength of the inference from similarity is context-dependent.

\paragraph{The obfuscation of the clinic} To identify causal effects per the mechanics of a SCM, we can measure a vector $\mathbf{X}$ sufficient for closing all backdoor paths. To establish a `\textit{ceteris paribus}' counterfactual contrast, we require constraints that match records or random variables. Only the second is feasible in our domain, and even then, under duress.

The confounding of a measurement-based care process becomes further evidenced when one considers that an ethical and therefore felicitous care process requires a \textit{minimization} of formal measurement; and that this dictate coexists with the fact that behavioral health treatment is a function of client-provider \textit{relationships}. A given social relationship---including a professional one---is necessarily framed by, and manifest in accordance with, a web of other relations and variables. A non-trivial subset of this web \textit{possibly} influences both treatment and outcome, and \textit{this} is sufficient for confounding.

As a clinician, I do not continuously measure how societal forces impact me, my service delivery, and the outcomes of the client; nor do I conscientiously or exhaustively measure how smaller, more local variables impact my clinical actions and client experience. This is not a flaw in my practice. If an arbitrary provider decided to stop and formally measure all relevant assemblages of variables that \textit{might} be impacting treatment and outcome, such a practice would destroy the care process with probability one. A set of possible confounders exists in every moment, and in every moment's horizon. Thus, in a care process, we can safely assume that there exists some vector of unmeasured confounders, $\mathbf{U}$.

For a given individual care process, matching is also precluded in any straightforward way. An arbitrary client arrives to a provider untreated by that provider, yielding baseline measures $Y_0$; treatment commences after assessment. $Y_0$ and any measure juxtaposed with it are hence separated by time and space. In the absence of additional intellectual manipulations or constraints, `\textit{ceteris paribus}' can be present only as an enforced contingency. This does not rule out cogent causal inference, but it weakens it.

\paragraph{Related work}

N-of-1 studies possess an expanding literature, especially given growing interest in personalized care \citep{backman1999case, kwasnicka2019challenges, fountzilas2022clinical, kazdin2020single, lundervold2000best, kratochwill2013single, kratochwill2015single, duan2013single, duan2022personalized, kravitz2022conduct}. This literature almost invariably relies on a blocking design that divides a care process into phases of treatment: $A$ for a period without treatment, $B$ for one type of treatment, and so forth. The typical care process is an `AB' study in this language. Designs that aspire to more cogent scientific knowledge repeat each phase and often randomize the sequence (e.g., $ABBAAAB$). Then, under the assumption that observations within the same phase label constitute the same population, or that a stable `parallel trend' is present, the researcher contrasts observations or their summary functions, with data visuals often utilized as an intuitive basis for inference \citep{kazdin2020single}.

This literature has much to impart to the study of intra-individual change, which is critical to understanding causality as it presents itself in radically unique processes \citep{molenaar2004manifesto, molenaar2009new, molenaar2015relation}. However, the design concept is mostly exogenous to our domain, which is the natural individual care process that unfolds for its own sake, and not for the sake of knowledge-construction. Randomization and the intentional withholding of treatment are either unethical or subversive of what is essential to the process. Moreover, the construction supposes that treatment effects `wash out' so that a new state can be observed anew. Yet, almost all treatment approaches presuppose that they impart \textit{lasting} dispositional changes. This approach to N=1 design would hence become applicable under a null hypothesis only, whereas a care process demands \textit{positive} evidence of treatment efficacy, not merely a rejection of a null hypothesis.

Note also that this structure of observation does not alleviate the matching problem; it multiplies it. There is no such thing, \textit{in and of itself}, as a repeated observation of a phase. A client who discontinues treatment and recommences later yields, say, $A_0 B_0 A_1 B_1$, but the person who presents at $A_1$ does so conditional on $A_0$, $B_0$, and other unknown events. They are self-given as similar to the person at $A_0$ by function of what is stable of their body of identity, but they are not the same.

\subsection{The Problem of Dependence}

A final quandary is probabilistic dependence. The N-of-1 literature identifies dependence as a challenge since all data points originate from the same person. When combined with the fact that a typical care process possesses relatively few measurements, this dependence limits the attenuation of uncertainty with respect to summary statistics. As a result, more heuristic methods of analysis are often employed \citep{kazdin2020single, de2026n}. We concur, and we note that the problem runs deeper than is usually acknowledged. Once probability is accepted as a measure of personal belief, the modeling agent is a shared cause of every random variable they author. Since certain self-knowledge is patently absurd, uncertainty about one's own epistemic state induces an association among \textit{all} of one's random variables by construction. Accounting for 'author effects' is therefore an imperative. An authentic DAG that includes the modeler as a common cause would therefore exhibit ubiquitous confounding in the absence of conditional self-measure. We do not develop this argument further here since it is our goal to alleviate this burden. It suffices for our purposes to recognize that unmeasured confounding is inescapable in the clinic for the practical reasons given above. What follows is the need for a route to causal inference that does not require closing backdoor paths at all.

\section{A Limited Solution}

Confounding and matching are the major obstacles to a valid framework of causal inference for individual care processes: the former especially because it is in the provider's and client's interest to minimize measurement and therefore render large sets of confounding variables active. In this section, we prove that a route to causal inference exists that sidesteps this problem because it does not require the measurement of confounders.

First, some notation. Let $\sigma_{X, Y}$ stand in for $\text{E}(XY) - \text{E}(X)\text{E}(Y)$ and $f(\cdot)$ represent a mass function or density. Furthermore, let $R=\int_{x \in \Omega} dx$ or $|\Omega|$ for continuous and discrete outcome sets, respectively. We require one defined entity: $\kappa = R^{-1} \sigma_{X, f(X)^{-1}}$, where $X$ is distributed according to $f(x)$. The targeted covariance relates to sum-symmetry and skew in a probability distribution \citep{sparkes2024functional}. Note that $\text{E}\{ f(X)^{-1} \} = R$ and hence $R^{-1}$ standardizes $X\cdot f(X)^{-1}$. For now, the reader may understand $\kappa$ as a value that captures sum-symmetry and aspects of skew. Further exposition follows below.

Next, we provide the main conditions used in this paper for identifying causal estimands. Of these, we truly require only the first. The remaining two open up more routes for inference.

\begin{enumerate}
	\item[I.] Positivity: Observe $U$ with $\Omega_U$ and $T$ with $\Omega_T$. Then for all $t \in \Omega_T$ and $u \in \Omega_U$, $f(t|u) > 0$.
	\item[II.] Epistemic preservation: Let $t, t' \in \Omega_T$ such that $t \neq t'$. Then $\kappa_{Y|t} = \kappa_{Y|t'}$.
	\item[III.] Support recall: Let $Y$ be a summary outcome of interest. Denote $\hat{Y}$ as an estimate of $Y$. Then $\Omega_{\hat{Y}|t=0} = \Omega_{Y|t=0}$.
\end{enumerate}

Positivity is a common assumption in the potential outcomes framework \citep{hernan2010causal}. It is usually interpreted to mean that all values of $t$ are present in the population strata created by $U=u$. Our understanding does not differ markedly in motif. However, since our philosophy of probability dictates that it is a measure of belief, positivity implies that, whenever we are uncertain of $T$, no $u \in \Omega_U$ \textit{eliminates} values of $t$ from possibility. In other words, $\Omega_{T|u} = \Omega_T$ for all $u \in U$. Equivalently, $f(t, u) > 0$ for all $t, u$, i.e., $\Omega_{T, U} = \Omega_T \bigtimes \Omega_U$. We frame positivity with respect to $u$ only, but it can be extended to $f(t|x, u)$. Since most providers will not consistently measure covariates formally, we default to the expression with just $u$. Fortuitously, one can always theoretically close the backdoor path in a DAG with just $T$, $Y$, and $U$.

An exploration of what this assumption means for practitioners is helpful. Essentially, the condition is fulfilled when the possibility of a treatment, or the possibility of a (vector-valued) confounder, \textit{does not eliminate the possibility of values in its complementary variable}. Consider a concrete failure case. A therapist marginally has a particular set of interventions to choose from when classifying their actions. In one session, however, the client arrives in crisis, and we intuit that background factors---atrocious air quality after a fire, or intoxication---are exerting influence on the client's state in addition to the delivery of services. In the presence of a crisis, the therapist will not engage the client in a casual exercise that maps out thoughts. This is because the background factors that lead to crisis cannot ethically co-exist with positive probability with a casual intervention not attuned to de-escalation. This state of affairs generalizes since certain situations can preclude representations of interventions. From this thought experiment we extract a general rule of thumb: the more granular we become in our representation, the more likely it is that positivity will be violated in some way, at least per our intuition, until we reformulate things cogently.

This point is related to how client-centered care can \textit{limit} our ability to intervene in a strict sense. Our ethics dictate that interventions remain \textit{responsive} to a client's disposition, which is influenced by a constellation of unknown forces. Interventions must therefore remain in dialogue with this same constellation. This is the conundrum of person-centered care, and it conflicts with \textit{evidence-based} care that purports to build a causal understanding for decision-making from process data.

Nevertheless, certain coarse conceptualizations of clinical intervention do fulfill positivity. If the definition of $T$ captures \textit{what is essentially invariant} of intervention as treatment unfolds---what is instituted in practice \textit{by design} and \textit{by convention}, regardless of the client's disposition or identity---then $T$ can adopt any of its values in the presence of any $u \in \Omega_U$. For instance, say the provider is a case manager who provides two types of services: linkages to resources and planning. In accordance with their training, the case manager \textit{always} enacts both in every session. This is true even in the event of a crisis, once they have acquired safe distance. Hence, insofar as we define $T$ as binary---$T=1$ when case management services are provided and $T=0$ when they are not---we know that $0, 1 \in \Omega_T$ for a given encounter, and most certainly when $T$ is abstracted further as a simple marker of exposure over periods of time. Insofar as $U$ \textit{remains valid as a representation}, i.e., the DAG remains felicitous as a map of what is essential of the world with respect to the $(T, Y)$ relationship, positivity is necessarily fulfilled. It would present as violated only if the DAG itself lost credibility, i.e., if the set of relevant contextual factors, as informally intuited, \textit{changed} between periods. In that situation, however, one authors a new map.

A word is due on our employment of uncertainty, for treating $T$ as uncertain might seem strange. What is \textit{essential} of an action is often irreducible to what can be empirically operationalized and perceived. The speech act framework of Austin and Searle is useful here \citep{austin1962, searle1979expression}. A speech act possesses its \textit{locution} (what is said), \textit{illocution} (its intent of action), and \textit{perlocution} (its effect). We generalize this structure to interventions. In a given session, a provider knows \textit{they did something}. However, they often do not know with certainty to what degree they \textit{successfully} actualized their intent. Our case manager hands over a paper with numbers for food banks. Whether this represents \textit{a linkage to resources} is context- and client-dependent. For instance, if the relationship is strained, the client might experience it as a placation. In another example, perhaps the provider might experience their own action as perfunctory. What is essential of the service is the manner in which the \textit{relationship} is utilized to \textit{actualize} the intended ends, and it is this actualization that is susceptible to uncertainty almost ubiquitously. It is also precisely where confounding can obscure things since even a planned intervention is actualized in dialogue with context, which can change our understanding. Herein is where the distribution of probabilities can be altered. Positivity remains fulfilled insofar as the provider does not authentically detect that extraneous forces precluded the very possibility of the illocution or intent generally: such as when, say, a distracting national event reduces a session to peer-like conversation such that no therapy occurred at all. An event like this may violate positivity for that session, however, without violating it for the coarse binary encoding of the exposure over time, insofar as the provider or client authentically determines that \textit{some} treatment was in fact enacted.

A concise takeaway is this: positivity holds when observed treatment is conceptualized at a sufficiently coarse level, intent-effect uncertainty exists per this conceptualization of possibilities, and the DAG remains a valid representation of world(s).

For the condition of epistemic preservation, remember that, as much as we wish to frame our uncertainty of objects \textit{as if} that uncertainty exists `in the past,' all designations of probability distributions manifest necessarily in the present. This follows from the fact that a measure of belief cannot exist in a discorporate state. Thus, although $y_0$ and some function of treatment-exposed outcomes came into existence as records at separate moments, any given expression of $Y_0$ and the treatment-exposed function \textit{as random variables} necessarily arises in the unfurling present. The notion that they may exhibit the same sum-symmetric or skew-like trend, as represented by $\kappa$, is therefore less incredulous than it first appears. The condition is likely to hold if the same agent, in the same greater moment, assigns distributions of uncertainty with the same structural motif. Many individuals assign their probabilities in a manner that is insensitive to the specificity at hand. For instance, they might focus on the value they deem most probable and assign all other probabilities symmetrically around it. The core point, symmetry aside: \textit{if it is reasonable to assert that the agent employs the same shape of distribution relative to its range of possible values, this condition plausibly holds}. It holds \textit{exactly} when the agent assigns probabilities such that the expected value always maintains the same position relative to its support.

We make this statement exact. A useful identity is as follows \citep{sparkes2024functional}:
$$\text{Av}(Y) = \text{E}(Y) + \kappa.$$
We call it the expected-average identity. Thus $\kappa = \text{Av}(Y) - \text{E}(Y)$, and $|\kappa|$ measures the distance between the functional average and the expected value. A preservation of $\kappa$ across treatment and non-treatment indicates that the agent assigns probabilities with the same structural motif.

To aid in the consideration of this condition, we introduce the concept of $\mathbb{U}$ random variables: a class of variables with sum-symmetric probability distributions, explored in detail elsewhere \citep{sparkes2024functional}. When $Y$ is a $\mathbb{U}$ random variable, $\text{Av}(Y) = \text{E}Y$ and therefore $\kappa=0$. All $Y$ with symmetric distributions are in the $\mathbb{U}$ class, though not conversely. Hence, any consumer who selects a value $y$ in the `center' of $\Omega$ with the highest probability and assigns other uncertainties symmetrically around it constructs a $\mathbb{U}$ random variable. So does any consumer who is indifferent and assigns probabilities equally. For supports that are a single interval of real numbers or an unbroken chain of integers, membership in the $\mathbb{U}$ class is equivalent to the areas above and below the cumulative distribution function being equal. If $Y_0$ and its contrasting function of treatment-exposed random variables are both in the $\mathbb{U}$ class, condition II is fulfilled. This will often be the case for empirical means or \textit{estimates} of them. In place of a simple baseline score, the provider can ask the client to estimate the average value of their scores across a relevant period previous to treatment. This reasonably affords a contrast of $\mathbb{U}$ variables.

Condition III provides our bridge to matching. Again let $T=1$ indicate that treatment is active. A provider naturally has access to some summary function $S_{t=1} =h(\mathbf{Y}|t=1)$, but not usually to $S_{t=0} = h(\mathbf{Y}|t=0)$, since the baseline period is typically brief and the provider had no contact with the client before it. One fix is to have the client estimate this measure, i.e., to specify $\hat{S}_{t=0}$ during the baseline period before treatment commences. The bridge arrives via the assumption that the outcome space of $\hat{S}_{t=0}$ equals the outcome space of $S_{t=0}$, i.e., that even though the client estimates the summary of their outcome experience from memory, and this estimate might differ in its distribution of probabilities, it nevertheless draws upon the same possible world of values.

Note that many validated scales already rely on a client's memory in a similar manner. The Patient Health Questionnaire-9 (PHQ-9) asks its user to answer based upon their experience in the previous two weeks \citep{kroenke2001phq}; the Kessler Psychological Distress Scale references the past 30 days, while population health surveys often reference the previous year \citep{mewton2016psychometric, kjellsson2014forgetting}. Clinical assessments rely on client recall for the entirety of their life experience, generally.

Importantly, this condition does not require `accurate' recall. Rather, it presupposes only that the client's current understanding of the summary of their outcome experience remains anchored in the same set of possibilities that would have existed had they measured them. This is unverifiable, but it is reasonable and intuitive to suppose since it is a milder condition than usual. Pretend that we rate our experience of fatigue today as a `2' using the PHQ-9, and that we would only consider a `1' or a `2' as possible per our experience. A week passes and we forget how we answered, or we answer differently in accordance with our current mood. Now we say that we were a `1' at that time. This poses no problem insofar as uncertainty remains anchored upon both `1' and `2.' Anything else can vary. If we think the client or the provider is capable of mentally summarizing experience, then this condition seems safe.

\subsection{Basic Results}

Now, we prove our main results. Recall our setup. We have three variables of interest: $T, Y,$ and $U$. We employ a canonical DAG with line set $\{ T \to Y, U \to T, U \to Y \}$, where $Y$ represents our summary measure of interest. We establish that this alone is sufficient for establishing that $\Omega_{Y|t} = \Omega_{Y|\text{do}(t)}$.

\begin{proposition}
Suppose a DAG with node set $\{ Y, T, U\}$ and line set $\{ T \to Y, U \to T, U \to Y \}$. Then $\Omega_{Y|t} \subseteq \Omega_{Y|\text{do}(t)}$ for all $t \in \Omega_T$. Furthermore, suppose that $f(t|u) > 0$ for all $t \in \Omega_T$ and $u \in \Omega_U$. Then for an arbitrary $t \in \Omega_T$, $\Omega_{Y|t} = \Omega_{Y|\text{do}(t)}$.
\end{proposition}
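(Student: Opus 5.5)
The plan is to write both distributions explicitly through the truncated factorization of the DAG and then compare which terms in each sum are positive. With line set $\{T\to Y, U\to T, U\to Y\}$, the joint factors as $f(y,t,u)=f(y\mid t,u)\,f(t\mid u)\,f(u)$. The interventional distribution deletes the factor $f(t\mid u)$:
$$f(y\mid \text{do}(t))=\sum_{u\in\Omega_U} f(y\mid t,u)\,f(u).$$
This is the back-door adjustment, since $U$ closes the only back-door path. It is replaced by an integral for continuous $U$. The observational conditional is
$$f(y\mid t)=\sum_{u\in\Omega_U} f(y\mid t,u)\,f(u\mid t), \qquad f(u\mid t)=\frac{f(t\mid u)f(u)}{f(t)},$$
which is well defined because $t\in\Omega_T$ gives $f(t)>0$. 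Throughout I treat supports of discrete variables as the sets of values with positive mass, as the paper does.

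For the first inclusion, take $y\in\Omega_{Y\mid t}$. Then $f(y\mid t)>0$, so some $u^\ast$ has $f(y\mid t,u^\ast)f(u^\ast\mid t)>0$. By Bayes, $f(u^\ast\mid t)>0$ forces $f(u^\ast)>0$. Hence the $u^\ast$ term of the do-sum, $f(y\mid t,u^\ast)f(u^\ast)$, is positive. All terms are nonnegative, so $f(y\mid \text{do}(t))>0$ and $y\in\Omega_{Y\mid\text{do}(t)}$. This step uses no positivity assumption.

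For the reverse inclusion under positivity, take $y\in\Omega_{Y\mid\text{do}(t)}$. Then some $u^\ast\in\Omega_U$ has $f(y\mid t,u^\ast)f(u^\ast)>0$. Positivity gives $f(t\mid u^\ast)>0$, so $f(u^\ast\mid t)=f(t\mid u^\ast)f(u^\ast)/f(t)>0$. The $u^\ast$ term of the observational sum is therefore positive, and $y\in\Omega_{Y\mid t}$. Combining the two inclusions gives equality.

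The main obstacle is making sense of $f(y\mid t,u)$ on pairs $(t,u)$ that have zero observational probability. In the do-formula this conditional is defined through the structural equation $Y=g(T,U,\varepsilon_Y)$ evaluated at $T=t$, not as an observational ratio. The two versions agree on $\Omega_{T,U}$ by the modularity of the SCM. The first inclusion only touches pairs with $f(t,u)>0$. For the second, positivity gives $\Omega_{T,U}=\Omega_T\times\Omega_U$, so every pair that matters has positive mass. I would state this explicitly.

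In the continuous case, where supports are closed sets, I would repeat the argument with densities. "Some term positive" becomes "positive on a set of positive $f(u)$-measure", and I would take closures at the end. Equality of closures follows from equality of the positive-density sets up to null sets. I would note this measure-theoretic caveat rather than work through it in full.
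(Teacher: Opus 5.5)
Your proof is correct and follows essentially the same route as the paper. Both expand $f(y\mid \text{do}(t))$ and $f(y\mid t)$ over $u$ with weights $f(u)$ and $f(u\mid t)$ and match positive terms, using $f(u\mid t)>0 \Rightarrow f(u)>0$ for the first inclusion and positivity for the reverse; you argue directly rather than by contradiction, and your remark that the $f(y\mid t,u)$ in the do-formula is the structural conditional, agreeing with the observational one on $\Omega_{T,U}$, fills in a point the paper leaves implicit.
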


\begin{proof}
	Suppose the premises. Let both $t, u$ be arbitrary and observe that $f(t|u) > 0$ is equivalent to $f(u|t) > 0$ since both $f(u)$ and $f(t)$ are strictly non-zero on their supports by definition.

	By the law of total probability and the adjustment formula with respect to closing backdoor paths:
	$$f(y | \text{do}(t)) = \sum_{u \in \Omega_U} f(y | t, u) f(u).$$
	However, also observe the following by the same law:
	$$f(y | t) = \sum_{u \in \Omega_U} f(y | t, u) f(u|t).$$

	Observe that both functions---$f(y|t)$ and $f(y | \text{do}(t))$---are functions of $y$ only. Hence, insofar as $f(u | t) > 0$ on the same set of values as $f(u)$, they are \textit{necessarily} positive on the very same set of values.

	To see this, let $t \in \Omega_T$ be arbitrary and assume for a proof by contradiction that there exists a $y_* \in \Omega_{Y|t}$ such that $y_* \notin \Omega_{Y|\text{do}(t)}$. Then $\sum_{u \in \Omega_U} f(y_* | t, u) f(u|t) > 0$ and $\sum_{u \in \Omega_U} f(y_* | t, u) f(u) = 0$. Note that the second conjunct cannot be true since each $f(u) > 0$ for $u \in \Omega_U$ by construction and there must exist a $u_* \in \Omega_U$ such that $f(y_*|t, u_*) > 0$ and $f(u_*|t) > 0$ since $\sum_{u \in \Omega_U} f(y_* | t, u) f(u|t) > 0$.  Consequently, $\Omega_{Y|t} \subseteq \Omega_{Y|\text{do}(t)}$.

	Now, let $t$ again be arbitrary and suppose the opposite direction for contradiction: that there exists a $y_* \in \Omega_{Y|\text{do}(t)}$ such that $y_* \notin \Omega_{Y|t}$. By the previous logic, we know that there must exist a $u_* \in \Omega_U$ such that $f(y_* | t, u_*) > 0$. However, this contradicts our assumption for an arbitrary $t$ that $f(u|t) > 0$ for all $u \in \Omega_U$. This is because $\sum_{u \in \Omega_U} f(y_* | t, u) f(u|t) = 0$ implies that $f(u_*|t)=0$ when it is the case that $f(y_*|t, u_*) > 0$. Therefore, it also follows that $\Omega_{Y|\text{do}(t)} \subseteq \Omega_{Y|t}$. This establishes our statement that $\Omega_{Y|t} = \Omega_{Y|\text{do}(t)}$.
\end{proof}

Proposition~1 is easily extendable to situations with measured random variables $\mathbf{X}$. However, this requires positivity in an expanded form: $f(t|\mathbf{x}, u) > 0$ for all $t, \mathbf{x}, u$. No essential alteration of the logic of the proof is necessary, although the new form of positivity requires additional justification. Proposition~1 is a coherence statement that constrains the modeling outlook of the reasoning agent. Nevertheless, the result still applies in non-Bayesian domains, albeit with a different interpretation, and other schools of causal inference can appropriate it to identify $\Omega$ estimands.

Proposition~1 alone affords us a route to causal inference. A measurement-based care process will always produce at least one point of baseline datum, $y_0$. When $y_0$ is interpreted as uncertain, it becomes $Y_0$ and hence, under only positivity, $\Omega_{Y_0} = \Omega_{Y_0|\text{do}(t=0)}$. The provider and client are thus validly free to reason about $\Omega$ effects that contrast $\Omega_{Y_0|\text{do}(t=0)}$ and $\Omega_{Y|\text{do}(t=1)}$ for some meaningful uncertain function of the treatment data, $Y$. The contrast is rendered weaker inductively because these supports are imperfectly comparable: this is the matching problem, which limits, but does not eliminate, the inductive potential of the contrast.

We can now see how condition III serves us. Since $\Omega_{\hat{Y}} = \Omega_{Y|t=0}$, it follows under positivity that $\Omega_{\hat{Y}} = \Omega_{Y|\text{do}(t=0)}$. This statement closes the comparability gap. If condition III is authentically believable and defensible, inference with respect to treatment effects is rendered much more cogent. Recall that we can also do this because the DAG itself encodes counterfactual information over our set pre- and post-treatment periods. By identifying a single (summary) outcome $Y$ as possibly impacted by $U$ and a binary $T$, we obtain $Y|t=1$ and $Y|t=0$ as possible governors of the entirety of the same time-space. Condition III allows us to use a summary specification for the outcome drawn from a period of the model that exists beyond measurement (since it occurred before the assessment) under the implicit assumption that its outcome space matches that of $Y|t=0$ over the entirety of the period. 

This brings us to condition II. Under positivity and condition II, the expected-average identity affords us the following fact:
$$\text{E}(\hat{Y})-\text{E}(Y|t=1) = \text{Av}\{Y|\text{do}(t=0) \} - \text{Av}\{Y|\text{do}(t=1) \} + (\kappa_{Y|t=1} - \kappa_{\hat{Y}}).$$
In general, enforcing condition II then yields an exact identity since $\kappa_{Y|t=1} - \kappa_{\hat{Y}} = 0$:
$$\text{E}(\hat{Y})-\text{E}(Y|t=1) = \text{Av}\{Y|\text{do}(t=0) \} - \text{Av}\{Y|\text{do}(t=1) \}.$$

Condition II is a reasonable assumption under many circumstances since individuals often reason with this kind of structural bias, as previously argued. However, \textit{enforcing} it can sometimes feel inauthentic to the matters at hand. A common contrast of random variables allows us to arrive at condition II, more or less. Let $\hat{Y}$ represent an estimate of $\bar{Y}|t=0$ over a relevant and defensible period of recall and let $Y=\bar{Y}_{t=1}$. Since $\hat{Y}$ is manifested as a heuristic `averaging' of a ghostly sequence of informally measured experiences in recall, it is likely in many circumstances to be a $\mathbb{U}$ random variable. For $\bar{Y}|t=1$, a heuristic citation of the central limit theorem would suggest approximate $\mathbb{U}$ status, since approximate normality implies approximate symmetry. This is unlikely, however, under thick dependencies. Alternatively, one can argue that an agent overwhelmed by the joint uncertainty of averaging multiple random variables will assign probabilities to the summary function that are at least sum-symmetric, even if non-symmetric.

Bounding a given $\kappa$ also allows for sensitivity analysis. This is perhaps a more cogent route. It can be accomplished with a trivial but useful proposition when the supports in question are sets of integers with no integer missing between minimum and maximum.

\begin{proposition}
Suppose $Y$ is a discrete and regular random variable such that $\text{max}\{\Omega_Y \}=M$ and $|\Omega_Y|=K$. Then $\kappa =  \sum_{i=1}^{K-1} (i-1)f(M-i) - f(M) - 2^{-1}(K-3)$.
\end{proposition}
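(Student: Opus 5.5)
The plan is to compute $\kappa$ directly through the expected-average identity $\text{Av}(Y)=\text{E}(Y)+\kappa$. If we prefer not to lean on the cited identity, it follows in one line from the definition of $\kappa$. Since $\text{E}\{Y f(Y)^{-1}\}=\sum_{y\in\Omega_Y} y = K\,\text{Av}(Y)$ and $\text{E}\{f(Y)^{-1}\}=K=R$, we get $R^{-1}\sigma_{Y,f(Y)^{-1}} = \text{Av}(Y)-\text{E}(Y)$. So $\kappa=\text{Av}(Y)-\text{E}(Y)$, and the task reduces to evaluating both terms for a regular support.

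First, I would use regularity (no integers missing between minimum and maximum) to write $\Omega_Y=\{M-i : i=0,1,\ldots,K-1\}$. The functional average of this arithmetic progression is $\text{Av}(Y)=M-\tfrac{K-1}{2}$.

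Second, I would index the expectation from the top of the support:
\begin{equation*}
\text{E}(Y)=\sum_{i=0}^{K-1}(M-i)f(M-i)=M-\sum_{i=1}^{K-1} i\,f(M-i),
\end{equation*}
using $\sum_{i=0}^{K-1} f(M-i)=1$. Subtracting gives
\begin{equation*}
\kappa=\sum_{i=1}^{K-1} i\,f(M-i)-\tfrac{K-1}{2}.
\end{equation*}

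The only step needing care is matching this to the stated form, which I expect to be the main (bookkeeping) obstacle. I would split $i=(i-1)+1$ inside the sum. Normalization gives $\sum_{i=1}^{K-1}f(M-i)=1-f(M)$, so
\begin{equation*}
\kappa=\sum_{i=1}^{K-1}(i-1)f(M-i)+1-f(M)-\tfrac{K-1}{2}
=\sum_{i=1}^{K-1}(i-1)f(M-i)-f(M)-2^{-1}(K-3).
\end{equation*}
This is the claimed expression. As a sanity check, I would verify the case $K=1$: the sum is empty and $f(M)=1$, so $\kappa=-1+1=0$, as it must be for a degenerate variable.
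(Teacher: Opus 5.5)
Your proof is correct and follows the paper's route. Like the paper, you write $\kappa=\text{Av}(Y)-\text{E}(Y)$, index the regular support from the top as $\{M-i\}_{i=0}^{K-1}$ to get $\kappa=\sum_{i=1}^{K-1} i\,f(M-i)-2^{-1}(K-1)$, and then split off $1-f(M)$ using normalization. Your one-line derivation of $\kappa=\text{Av}(Y)-\text{E}(Y)$ from $\kappa=R^{-1}\sigma_{Y,f(Y)^{-1}}$, which the paper simply takes from the cited identity, and your $K=1$ check are useful small additions.
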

\begin{proof}
Recall: $\kappa=\text{Av}(Y)-\text{E}(Y)$. Let $m, M$ be the minimum and maximum of the support respectively. For a discrete random variable on integer support $\{m, m+1, m+2, \ldots, M-1, M\}$, the support can be restated as $\{M-K+1, M-K+2, \ldots, M-1, M \}$. Hence, by definition:
\begin{align*}
\kappa & = K^{-1} \sum_{i=0}^{K-1} (M-i) - \sum_{i=0}^{K-1} (M-i)f(M-i) \\
& = \sum_{i=1}^{K-1} i \cdot f(M-i) - 2^{-1}(K-1)
\end{align*}
Now, making use of the fact that $\sum_{i=0}^{K-1} f(M-i) = 1$, it is true that:
$$\sum_{i=1}^{K-1} i \cdot f(M-i) = 1 - f(M) + \sum_{i=1}^{K-1} (i-1)f(M-i).$$
By substitution then:
$$\kappa = \sum_{i=1}^{K-1} (i-1)f(M-i) - f(M) - 2^{-1}(K-3)  .$$
\end{proof}
To see how this proposition is useful, observe $K \in \{2, 3 \}$. Then $\kappa = 2^{-1} - f(M)$ and $f(M-2) - f(M)$ respectively, which are easily bounded with or without additional, reasonable assumptions.

To see how this might apply, specify $Y_0$ to be ``the maximum intensity of symptom experience in recent history'' such that $\Omega_{Y_0} = \{8, 9, 10\}$, but this set is left implicit or is unknown. The client might then choose $y_0=9$ under uncertainty. Then $Y|t=1$ is ``the maximum intensity of symptom experience since treatment commenced,'' empirically instantiated by $\text{max}(y_1, y_2, \ldots, y_m)=y_*$. The client or provider need not consider each random variable atomically, as, insofar as probability is a personal measure with an \textit{intentional focus}, it is sufficient for clinical decision-making to reason about uncertainty from the vantage point of $Y_*$ alone. If the target was then $\text{Av}(Y_0) - \text{Av}(Y_*)$, we could reason about $\kappa_0 - \kappa_* = f_0(M_0-2)-f_*(M_*-2) +f_*(M_*)-f_0(M_0)$ under the assumption that $|\Omega_*|$ also possesses three values. A trivial bound is $|\kappa_0 - \kappa_*| \leq 2$, which can easily be sharpened. When $\text{Av}(Y_0) - \text{Av}(Y_*)$ is intuited as high in numerical magnitude, even the gross bound is useful.

As a final note, one might observe that, once positivity holds, the route that uses condition II is superfluous because $\text{Av}(Y_0) - \text{Av}(Y_*)$ is an $\Omega$ effect that is \textit{already} identified. From this vantage point alone, this is true. Nevertheless, we have supplied it because it gives any difference of expected values causal meaning, in conjunction with a DAG and positivity alone. If one wishes to target a difference in expected values, condition II supplies a road under assumptions markedly milder than those usually claimed.

\section{Data Application}

Data from a case study of a 23-year-old Greek university student referred for counseling is utilized for demonstration. Details are available in the original case report by \cite{tsitsas2014cognitive}. After experiencing severe symptoms of panic and anxiety for six months, the student attended twenty 50-minute sessions of cognitive behavioral therapy spread over five months, completing a Likert ``Everyday Self-Monitoring Scale'' each week for their anxiety and phobia experiences (0: no anxiety; 10: most anxiety without loss of generality). We work with the anxiety scores only. Altogether, $m=19$ scores were provided, in order of measurement: $\{10, 9, 9, 8, 8, 8, 7, 7, 7, 7, 6, 6, 5, 5, 5, 4, 4, 4, 4 \}$. We take creative license with this case study from this point forward.

Since we intend our method to be applicable in the clinic by individuals who do not possess advanced training in probability, we employ a straightforward Bayesian analysis. Say our $\Omega$ estimand of interest is $\text{min}(\Omega_{Y_0 |\text{do}(t=0)}) - \text{min}(\Omega_{Y_*|\text{do}(t=1)})$, where $Y_*$ is the marginal random variable with respect to uncertainty around $y_* = \text{min}(y_1, y_2, \ldots, y_{19}) =4$. We posit that treatment is effective if $\text{min}(\Omega_{Y_0 |\text{do}(t=0)}) > \text{min}(\Omega_{Y_*|\text{do}(t=1)})$. Designate this uncertain event as $\theta$. To fulfill our dictates to be evidence-based, we seek to predicate our judgment of the probability of this event on the observed data, i.e., to derive $\text{Pr}(\theta|y_0, y_*)$. Again, we do not consider the probability of a particular vector $\mathbf{y}$, since a provider and client might not possess intuition for this. Uncertainty in the event $\{y_0, y_*\}$ is likely to be more easily expressed.

We derive $\text{Pr}(\theta| y_0, y_*) = \text{Pr}(y_0, y_*| \theta) \text{Pr}(\theta) / \text{Pr}(y_0, y_*)$ using Bayes' theorem. Say we are incredulous since the 23-year-old possessed severe symptoms at the start and we have a belief that panic is endemic to their nervous system, however terribly founded this belief may be. Hence, we say $\text{Pr}(\theta) =.35$. However, we reason that if $\theta$ is the case, then $\text{Pr}(y_0, y_*| \theta) = .8$; and if $\theta$ \textit{is not the case}, then $\text{Pr}(y_0, y_*| \theta^c) = .3$, where $\theta^c$ is the complementary event. As a consequence, $\text{Pr}(y_0, y_*) = .8\times .35 + .3 \times .65 = .475$ and $\text{Pr}(\theta| y_0, y_*) = .8 \times .35 / .475 = .5895$. Hence, we now possess a weak, albeit positive belief in the efficacy of treatment, based upon the evidence gathered throughout care and this particular choice of metric. The qualitative conclusion is stable to the subjective specifications: varying the prior $\text{Pr}(\theta)$ across $.20$ to $.50$, holding the likelihoods fixed, yields posterior beliefs between $.40$ and $.73$. In every case, we observe a marked update toward efficacy relative to the prior.

Reviewing the logic of our example is beneficial. Besides the client's reception of psychiatric and psycho-therapeutic services, the cited paper makes no mention of life changes during the treatment period that are not a possible function of services. Thus, there is little evidence that the structure of background forces changed. Furthermore, the article reviews a delineated schedule of cognitive-behavioral interventions, which suggests a program designed in disconnection from the client's circumstances. This supports the notion that positivity is satisfied. From the data, $y_0$ and $y_*$ are at hand; positing them as uncertain introduces $Y_0$ and $Y_*$ with supports $\Omega_{Y_0}$ and $\Omega_{Y_* | t=1}$. Since positivity likely holds, these supports are equal to the ones that would have defined the random variables under conditions of intervention, and reasoning about functions of them is an endeavor identified from observed data.

Condition III is also likely satisfied if our judgment that no other life changes occurred throughout the treatment window is cogent. Without it, we would be wary of the strictly non-increasing sequence of scores, which might in some circumstances suggest improvement as a function of cycle or season. If true, this would surface a problem of comparability, although it would not invalidate the basic, self-given contrast of the identified causally-related estimands. We would then exist in the territory of `imperfect matching,' where our evidence suggests an impact in a manner limited by comparison across similar, but non-identical, conditions.

Note that the precision of the number $.5895$ might not be meaningful to the client or provider. However, the calculation allows for the manifestation of coherent, tempered credence. If it is too low (or too high) with respect to their intuition, then more exploration and discussion is due with respect to the working assumptions and specifications. In any case, it provides a measure of belief per the efficacy of treatment that allows for analysis and clinical decision-making. The client can use it alongside their provider to continue on the same course or to alter it, or the provider can use this reasoning process to inform their planning privately. The formal structure also allows for additional clarity for consultations or team meetings.

\subsection{Extension}

What is presented above is a workaday analysis capable of replication in a clinical setting `on the ground' without the assistance of computers. This is intentional, as the person likely to reason causally on a daily basis about treatment efficacy is a non-statistician. However, the proofs of this paper are not restricted to such an enterprise. \textit{Per contra}, they open up causal inference in typical population-based studies. Once a cogent population is granted, positivity is the only further condition required for identification, and insofar as $\Omega$ effects are the target, unmeasured confounding becomes immaterial to identification, although it may still be material for interpretation. This fact is a great boon for program evaluation and research.

\section{Discussion and Limitations}

This paper established that a class of causal estimands---$\Omega$ estimands, defined on supports rather than distributions---is identified in a structural causal model under positivity alone, with no requirement that confounders be known, measured, or adjusted for (Proposition~1). Two further conditions extend the reach of the framework. Support recall (condition III) closes the comparability gap between a sparsely measured baseline and the treatment period and epistemic preservation (condition II) gives ordinary differences of expected values causal meaning through the expected-average identity. Together, these results furnish the provider engaged in measurement-based care with an elementary but principled route to reasoning about treatment efficacy within a single care process.

An obvious objection deserves a direct answer. If $\Omega$ effects are insensitive to confounding because they discard distributional information, are they not also insensitive to most treatment effects? Supports, the objection goes, rarely change. This objection trades on the equivocation of scales and supports flagged in Section~2.1. The co-domain of an instrument is fixed; the support of a random variable, understood as a personal measure of uncertainty, is not. A client for whom scores of 9 and 10 were live possibilities before treatment, and for whom they are no longer live possibilities after, has undergone a change of support that is clinically meaningful. It is the elimination of the possibility of an intensity of suffering, for instance. Precisely this kind of change is what an effective course of treatment purports to accomplish. Nevertheless, the objection carries force within its proper bounds. Treatments that alter only the distribution of probability over an unchanged set of possibilities are invisible to $\Omega$ effects. The framework trades sensitivity for robustness, and the trade should be made knowingly.

Several limitations deserve emphasis. First, positivity is untestable and must be argued case by case via the defense of a DAG. Section~3 supplies heuristics---coarse conceptualizations of treatment, stability of the intuited DAG---but not a test. Second, conditions II and III are likewise unverifiable, although the content of condition III is frequently milder than the recall assumptions already embedded in validated instruments. Third, under our Bayesian interpretation, the results are coherence statements. They constrain what a reasoning agent may believe and do not deliver frequency guarantees. Fourth, the data application is an illustration of reasoning; it is not a validation study. Finally, the expected-average identity and the $\mathbb{U}$ class rest on work published elsewhere \citep{sparkes2024functional}, and the framework would benefit from independent development of both.

Future work should proceed along three lines: intensive and substantive philosophical work that scrutinizes the conditions of this paper; an application to program evaluation, where population definitions free the analysis from the N=1 matching condition; and an empirical study of how often, and for which conditions and treatments, supports are defensibly perceived as shifting in practice. The person-oriented tradition insists that the individual is the proper unit of psychological analysis. We have tried to show that this insistence need not come at the cost of causal rigor.

%%=========================================================================%%
%%  References
%%=========================================================================%%
\bibliographystyle{apacite}
\bibliography{sn-bibliography}

@misc{hernan2010causal,
  title={Causal inference},
  author={Hern{\'a}n, Miguel A and Robins, James M},
  year={2010},
  publisher={CRC Boca Raton, FL}
}

@article{greenland1999causal,
  title={Causal diagrams for epidemiologic research},
  author={Greenland, Sander and Pearl, Judea and Robins, James M},
  journal={Epidemiology},
  volume={10},
  number={1},
  pages={37--48},
  year={1999},
  publisher={LWW},
  doi={10.1097/00001648-199901000-00008}
}

@article{pearl2010foundations,
  title={The foundations of causal inference},
  author={Pearl, Judea},
  journal={Sociological Methodology},
  volume={40},
  number={1},
  pages={75--149},
  year={2010},
  publisher={Wiley Online Library},
  doi={10.1111/j.1467-9531.2010.01228.x}
}

@book{pearl2009causality,
  title={Causality},
  author={Pearl, Judea},
  year={2009},
  publisher={Cambridge university press},
  doi={10.1017/CBO9780511803161}
}

@article{pearl2010introduction,
  title={An introduction to causal inference},
  author={Pearl, Judea},
  journal={The international journal of biostatistics},
  volume={6},
  number={2},
  year={2010},
  publisher={De Gruyter},
  doi={10.2202/1557-4679.1203}
}

@misc{ccbhc2022,
author = {Brown, Jonathan and Breslau, Joshua and Wishon, Allison and Miller, Rachel  and Kase, Courtney and Dunbar, Michael and Stewart, Kate and Briscombe, Brian and Rose, Tyler and Dehus, Eric and DeWitt, Kathryn},
title = {Implementation and Impacts of the Certified Community Behavioral Health Clinic Demonstration: Findings from the National Evaluation},
institution = {Mathematica},
year = {2022},
}

@misc{brown2021implementation,
  title={Implementation and Impacts of the Certified Community Behavioral Health Clinic Demonstration: Findings From the National Evaluation},
  author={Brown, Jonathan and Breslau, Joshua and Wishon, Allison and Miller, Rachel and Kase, Courtney and Dunbar, Michael and Stewart, Kate and Briscombe, Brian and Rose, Tyler and Dehus, Eric and DeWitt, Kathryn},
  institution = {Mathematica},
  year={2021}
}

@misc{mauri2024characterizing,
  title={Characterizing crisis services offered by certified community behavioral health clinics: results from a national survey},
  author={Mauri, Amanda I and Rouhani, Saba and Purtle, Jonathan},
  journal={Psychiatric Services},
  pages={appi--ps},
  year={2024},
  publisher={American Psychiatric Association Washington, DC},
  doi={10.1176/appi.ps.20240152}
}

@article{hu2021brief,
  title={A brief report on certified community behavioral health clinics demonstration program},
  author={Hu, Yuanyuan and Stanhope, Victoria and Matthew, Elizabeth B and Baslock, Daniel M},
  journal={Social Work in Mental Health},
  volume={19},
  number={6},
  pages={534--541},
  year={2021},
  publisher={Taylor \& Francis},
  doi={10.1080/15332985.2021.1929663}
}

@article{uher2017etiology,
  title={Etiology in psychiatry: embracing the reality of poly-gene-environmental causation of mental illness},
  author={Uher, Rudolf and Zwicker, Alyson},
  journal={World Psychiatry},
  volume={16},
  number={2},
  pages={121--129},
  year={2017},
  publisher={Wiley Online Library},
  doi={10.1002/wps.20436}
}

@book{susser2006psychiatric,
  title={Psychiatric epidemiology: searching for the causes of mental disorders},
  author={Susser, Ezra and Schwartz, Sharon and Morabia, Alfredo and Bromet, Evelyn J},
  year={2006},
  publisher={Oxford University Press},
  doi={10.1093/acprof:oso/9780195101812.001.0001}
}

@article{borsboom2017network,
  title={A network theory of mental disorders},
  author={Borsboom, Denny},
  journal={World psychiatry},
  volume={16},
  number={1},
  pages={5--13},
  year={2017},
  publisher={Wiley Online Library},
  doi={10.1002/wps.20375}
}

@article{sparkes2024functional,
  title={The functional average treatment effect},
  author={Sparkes, Shane and Garcia, Erika and Zhang, Lu},
  journal={Journal of Causal Inference},
  volume={12},
  number={1},
  pages={20230076},
  year={2024},
  publisher={De Gruyter},
  doi={10.1515/jci-2023-0076}
}

@article{tsitsas2014cognitive,
  title={A cognitive-behavior therapy applied to a social anxiety disorder and a specific phobia, case study},
  author={Tsitsas, George D and Paschali, Antonia A},
  journal={Health psychology research},
  volume={2},
  number={3},
  year={2014},
  publisher={Open Medical Publishing},
  doi={10.4081/hpr.2014.1603}
}

@book{deFinetti1974Theory,
	title     = {Theory of Probability: A Critical Introductory Treatment},
	author    = {de Finetti, Bruno},
	volume    = {1},
	year      = {1974},
	publisher = {John Wiley \& Sons},
	address   = {New York},
	note      = {Translated by Antonio Mach{\`i} and Adrian Smith},
	doi       = {10.1002/9781119286387}
}

@article{Holland1986,
	author    = {Paul W. Holland},
	title     = {Statistics and Causal Inference},
	journal   = {Journal of the American Statistical Association},
	volume    = {81},
	number    = {396},
	pages     = {945--960},
	year      = {1986},
	publisher = {Taylor \& Francis},
	doi       = {10.1080/01621459.1986.10478354}
}

@article{backman1999case,
	title={Case studies, single-subject research, and n of 1 randomized trials: Comparisons and Contrasts: 1},
	author={Backman, Catherine L and Harris, Susan R},
	journal={American journal of physical medicine \& rehabilitation},
	volume={78},
	number={2},
	pages={170--176},
	year={1999},
	publisher={LWW},
	doi={10.1097/00002060-199903000-00022}
}

@article{kwasnicka2019challenges,
	title={Challenges and solutions for N-of-1 design studies in health psychology},
	author={Kwasnicka, Dominika and Inauen, Jennifer and Nieuwenboom, Wim and Nurmi, Johanna and Schneider, Annegret and Short, Camille E and Dekkers, Tessa and Williams, A Jess and Bierbauer, Walter and Haukkala, Ari and others},
	journal={Health Psychology Review},
	volume={13},
	number={2},
	pages={163--178},
	year={2019},
	publisher={Taylor \& Francis},
	doi={10.1080/17437199.2018.1564627}
}

@article{fountzilas2022clinical,
	title={Clinical trial design in the era of precision medicine},
	author={Fountzilas, Elena and Tsimberidou, Apostolia M and Vo, Henry Hiep and Kurzrock, Razelle},
	journal={Genome medicine},
	volume={14},
	number={1},
	pages={101},
	year={2022},
	publisher={Springer},
	doi={10.1186/s13073-022-01102-1}
}

@book{kazdin2020single,
	title={Single-case research designs: Methods for clinical and applied settings},
	author={Kazdin, Alan E},
	year={2020},
	publisher={Oxford University Press}
}

@article{lundervold2000best,
	title={The best kept secret in counseling: Single-case (N= 1) experimental designs},
	author={Lundervold, Duane A and Belwood, Marilyn F},
	journal={Journal of Counseling \& Development},
	volume={78},
	number={1},
	pages={92--102},
	year={2000},
	publisher={Wiley Online Library},
	doi={10.1002/j.1556-6676.2000.tb02565.x}
}

@article{kratochwill2013single,
	title={Single-case intervention research design standards},
	author={Kratochwill, Thomas R and Hitchcock, John H and Horner, Robert H and Levin, Joel R and Odom, Samuel L and Rindskopf, David M and Shadish, William R},
	journal={Remedial and Special Education},
	volume={34},
	number={1},
	pages={26--38},
	year={2013},
	publisher={Sage Publications Sage CA: Los Angeles, CA},
	doi={10.1177/0741932512452794}
}

@article{kratochwill2015single,
	title={Single-case research design and analysis: An overview},
	author={Kratochwill, Thomas R},
	journal={single-case Research Design and Analysis (psychology Revivals)},
	pages={1--14},
	year={2015},
	publisher={Routledge},
	doi={10.4324/9781315725994}
}

@article{duan2022personalized,
	title={Personalized data science and personalized (N-of-1) trials: Promising paradigms for individualized health care},
	author={Duan, Naihua and Norman, Daniel and Schmid, Christopher and Sim, Ida and Kravitz, Richard L},
	journal={Harvard data science review},
	volume={4},
	number={SI3},
	pages={10--1162},
	year={2022},
	doi={10.1162/99608f92.8439a336}
}

@article{kravitz2022conduct,
	title={Conduct and implementation of personalized trials in research and practice},
	author={Kravitz, Richard L and Duan, Naihua},
	journal={Harvard data science review},
	volume={4},
	number={SI3},
	pages={10--1162},
	year={2022},
	doi={10.1162/99608f92.901255e7}
}

@article{duan2013single,
	title={Single-patient (n-of-1) trials: a pragmatic clinical decision methodology for patient-centered comparative effectiveness research},
	author={Duan, Naihua and Kravitz, Richard L and Schmid, Christopher H},
	journal={Journal of clinical epidemiology},
	volume={66},
	number={8},
	pages={S21--S28},
	year={2013},
	publisher={Elsevier},
	doi={10.1016/j.jclinepi.2013.04.006}
}

@article{molenaar2009new,
	title={The new person-specific paradigm in psychology},
	author={Molenaar, Peter CM and Campbell, Cynthia G},
	journal={Current directions in psychological science},
	volume={18},
	number={2},
	pages={112--117},
	year={2009},
	publisher={SAGE Publications Sage CA: Los Angeles, CA},
	doi={10.1111/j.1467-8721.2009.01619.x}
}

@article{molenaar2004manifesto,
	title={A manifesto on psychology as idiographic science: Bringing the person back into scientific psychology, this time forever},
	author={Molenaar, Peter CM},
	journal={Measurement},
	volume={2},
	number={4},
	pages={201--218},
	year={2004},
	publisher={Taylor \& Francis},
	doi={10.1207/s15366359mea0204_1}
}

@article{molenaar2015relation,
	title={On the relation between person-oriented and subject-specific approaches},
	author={Molenaar, Peter CM},
	journal={Journal for Person-Oriented Research},
	volume={1},
	number={1-2},
	pages={34--41},
	year={2015},
	doi={10.17505/jpor.2015.04}
}

@article{de2026n,
	title={N-of-1 trials in clinical research: Methodological foundations, statistical approaches and implementation challenges},
	author={De Carvalho, Marcos Clint Leal and de Matos Dourado Sim{\~o}es, Matheus and Adler, Ubiratan Cardinalli and Junior, Antonio Brazil Viana and de Sousa, Caren N{\'a}dia Soares and Sanders, Lia Lira Olivier},
	journal={British Journal of Clinical Pharmacology},
	volume={92},
	number={3},
	pages={809--821},
	year={2026},
	publisher={Wiley Online Library},
	doi={10.1002/bcp.70382}
}

@book{austin1962,
	title     = {How to Do Things with Words},
	author    = {Austin, J. L.},
	year      = {1962},
	publisher = {Oxford University Press},
	address   = {Oxford, UK}
}

@book{searle1979expression,
	title     = {Expression and Meaning: Studies in the Theory of Speech Acts},
	author    = {Searle, John R.},
	year      = {1979},
	publisher = {Cambridge University Press},
	address   = {Cambridge, England}
}

@article{kroenke2001phq,
	title={The PHQ-9: validity of a brief depression severity measure},
	author={Kroenke, Kurt and Spitzer, Robert L and Williams, Janet BW},
	journal={Journal of general internal medicine},
	volume={16},
	number={9},
	pages={606--613},
	year={2001},
	publisher={Wiley Online Library},
	doi={10.1046/j.1525-1497.2001.016009606.x},
}

@article{frank2023toward,
  author  = {Frank, Richard G. and Shim, Ruth S.},
  title   = {Toward greater accountability in mental health care},
  journal = {Psychiatric Services},
  year    = {2023},
  volume  = {74},
  number  = {2},
  pages   = {182--187},
  doi     = {10.1176/appi.ps.20220097},
}

@article{kjellsson2014forgetting,
  author  = {Kjellsson, Gustav and Clarke, Philip and Gerdtham, Ulf-G.},
  title   = {Forgetting to remember or remembering to forget: {A} study of the recall period length in health care survey questions},
  journal = {Journal of Health Economics},
  year    = {2014},
  volume  = {35},
  pages   = {34--46},
  doi     = {10.1016/j.jhealeco.2014.01.007},
}

@article{mewton2016psychometric,
  author  = {Mewton, Louise and Kessler, Ronald C. and Slade, Tim and Hobbs, Megan J. and Brownhill, Louise and Birrell, Louise and Tonks, Zoe and Teesson, Maree and Newton, Nicola and Chapman, Cath and others},
  title   = {The psychometric properties of the {Kessler} {Psychological} {Distress} {Scale} ({K6}) in a general population sample of adolescents},
  journal = {Psychological Assessment},
  year    = {2016},
  volume  = {28},
  number  = {10},
  pages   = {1232--1242},
  doi     = {10.1037/pas0000239},
}

\end{document}